\pgfplotsset{compat=1.7}
\newtheorem{lemma}{Lemma}
\DeclareMathOperator{\sinc}{sinc}
\newcommand{\nosemic}{\renewcommand{\@endalgocfline}{\relax}}
\newcommand{\dosemic}{\renewcommand{\@endalgocfline}{\algocf@endline}}
\let\oldnl\nl
\newcommand{\nonl}{\renewcommand{\nl}{\let\nl\oldnl}}
\newcommand{\comment}[1]{}
\newabbreviation{OMA}{OMA}{orthogonal multiple access}
\newabbreviation{3GPP}{3GPP}{$3^{rd}$ Generation Partnership Project}
\newabbreviation{NR}{NR}{New Radio}
\newabbreviation{NOMA}{NOMA}{Non-orthogonal multiple access}
\newabbreviation{BS}{BS}{base station}
\newabbreviation{IRS}{RIS}{reconfigurable intelligent surface}
\newabbreviation{los}{LoS}{line of sight}
\newabbreviation{uspa}{USPA}{uniform square planar array}
\newabbreviation{SIC}{SIC}{successive interference cancellation}
\newabbreviation{ASR}{ASR}{achievable sum rate}
\newabbreviation{EE}{EE}{energy efficiency}
\newabbreviation{MPA}{MPA}{maximum \gls{ASR} power allocation}
\newabbreviation{EEPA}{EEPA}{energy efficient power allocation}
\newabbreviation{SINR}{SINR}{signal to interference plus noise ratio}
\newabbreviation{CSI}{CSI}{channel state information}
\newabbreviation{SRM}{SRM}{sum rate maximisation}
\newabbreviation{aoa}{AoA}{angle of arrival}
\newabbreviation{aod}{AoD}{angle of departure}
\newabbreviation{AUP}{AUP}{adaptive user pairing}
\newabbreviation{FPA}{FPA}{fair power allocation}
\newabbreviation{LTE}{LTE}{Long Term Evolution}
\newabbreviation{QoS}{QoS}{Quality of Service}
\newabbreviation{D2D}{D2D}{device to device}
\newabbreviation{5G}{5G}{fifth-generation}
\newacronym{h_i}{\ensuremath{\mathbf{h}_i}}{Channel Coefficient from $i^{th}$ user to \gls{IRS}}
\newacronym{h_r}{\ensuremath{\mathbf{h}_R}}{Channel Coefficient from \gls{IRS} to \gls{BS}}
\newacronym{beta}{\ensuremath{\beta_i}}{Channel gain from $i^{th}$ user to \gls{IRS}}
\newacronym{alpha}{\ensuremath{\alpha}}{Channel gain from \gls{IRS} to \gls{BS}}
\newacronym{psiIa}{\ensuremath{\psi_I^a}}{Azimuth angle of arrival at the \gls{IRS}}
\newacronym{psiIe}{\ensuremath{\psi_I^e}}{Elevation angle of arrival at the \gls{IRS}}
\newacronym{psiBa}{\ensuremath{\psi_B^a}}{Azimuth angle of arrival at the \gls{BS}}
\newacronym{psiBe}{\ensuremath{\psi_B^e}}{Elevation angle of arrival at the \gls{BS}}
\newacronym{phiIa}{\ensuremath{\phi_I^a}}{Azimuth angle of departure at the \gls{IRS}}
\newacronym{phiIe}{\ensuremath{\phi_I^e}}{Elevation angle of departure at the \gls{IRS}}
\newacronym{h_r^H}{\ensuremath{\mathbf{h}^H_R}}{Hermitian of the Channel Coefficient from IRS to BS}
\newacronym{lambda}{\ensuremath{\lambda}}{Wavelength of the signal}
\newacronym{theta}{\ensuremath{\mathbf{\Theta}}}{Reflection Matrix of IRS without considering imperfect phase compensation}
\newacronym{Theta}{\ensuremath{\tilde{\mathbf{\Theta}}}}{Reflection Matrix of IRS with imperfect phase compensation}
\newacronym{thetak}{\ensuremath{\theta_k}}{Phase shift introduced by $k^{th}$ element at the \gls{IRS}}
\newacronym{thetakcap}{\ensuremath{\hat{\theta}_k}}{Phase error at $k^{th}$ element at \gls{IRS}}
\newacronym{P_t}{\ensuremath{P_t}}{Available power budget}
\newacronym{s_i}{\ensuremath{s_i}}{Data symbol transmitted by $i^{th}$ user}
\newacronym{n_i}{\ensuremath{n_i}}{Noise effecting $i^{th}$ user}
\newacronym{gammaoma}{\ensuremath{\gamma_i^{\text{OMA}}}}{SINR of $i^{th}$ OMA user}
\newacronym{I}{\ensuremath{I}}{Interference power}
\newacronym{sigma^2}{\ensuremath{\sigma^2}}{Noise variance}
\newacronym{alpha_1}{\ensuremath{\alpha_1}}{Fraction of power allocated to the strong user}
\newacronym{alpha_2}{\ensuremath{\alpha_2}}{Fraction of power allocated to the weak user}
\newacronym{s_1}{\ensuremath{s_1}}{Data symbol transmitted by the strong user}
\newacronym{s_2}{\ensuremath{s_2}}{Data symbol transmitted by the weak user}
\newacronym{gamma_1}{\ensuremath{\gamma_1^{\text{NOMA}}}}{SINR of the strong user}
\newacronym{gamma_2}{\ensuremath{\gamma_2^{\text{NOMA}}}}{SINR of the weak user}
\newacronym{delta}{\ensuremath{\delta}}{Maximum possible value of phase noise}
\newacronym{h_1}{\ensuremath{\mathbf{h}_1}}{Channel coefficient of strong user}
\newacronym{h_2}{\ensuremath{\mathbf{h}_2}}{Channel coefficient of weak user}
\newacronym{csi_i}{\ensuremath{\gamma_i^{\text{CSI}}}}{CSI of $i^{th}$ user}
\newacronym{csi_1}{\ensuremath{\gamma_1^{\text{CSI}}}}{CSI of strong user}
\newacronym{csi_2}{\ensuremath{\gamma_2^{\text{CSI}}}}{CSI of weak user}
\newacronym{R_1}{\ensuremath{R_1^{\text{NOMA}}}}{Rate of strong user}
\newacronym{R_2}{\ensuremath{R_2^{\text{NOMA}}}}{Rate of weak user}
\newacronym{R1bar}{\ensuremath{\bar{R}_1}}{Minimum data rate requirement of strong user}
\newacronym{R2bar}{\ensuremath{\bar{R}_2}}{Minimum data rate requirement of weak user}
\newacronym{alpha1mpa}{\ensuremath{\alpha_1^{\text{MPA}}}}{Fraction of power allocated to the strong user in \gls{MPA}}
\newacronym{alpha2mpa}{\ensuremath{\alpha_2^{\text{MPA}}}}{Fraction of power allocated to the weak user in \gls{MPA}}
\newacronym{alpha1eepa}{\ensuremath{\alpha_1^{\text{EEPA}}}}{Fraction of power allocated to the strong user in \gls{EEPA}}
\newacronym{alpha2eepa}{\ensuremath{\alpha_2^{\text{EEPA}}}}{Fraction of power allocated to the weak user in \gls{EEPA}}
\newacronym{alpha2lb}{\ensuremath{{\alpha_{2_{\text{LB}}}}}}{Lower bound on $\alpha_2$}
\newacronym{alpha2ub}{\ensuremath{{\alpha_{2_{\text{UB}}}}}}{Upper bound on $\alpha_2$}
\newacronym{delubmpa}{\ensuremath{\gls{delta}_{\text{UB}}^{\text{MPA}}}}{Upper bound on imperfect phase compensation in \gls{MPA}}
\newacronym{alphalb}{\ensuremath{\alpha_{2_{\text{LB}}}}}{Lower bound on \gls{alpha_2}}
\newacronym{eta}{\ensuremath{\eta}}{}
\newacronym{kappa}{\ensuremath{\kappa}}{}
\newacronym{delub1}{\ensuremath{\gls{delta}_{\text{UB}_1}}}{}
\newacronym{delub2}{\ensuremath{\gls{delta}_{\text{UB}_2}}}{}
\newacronym{delubeepa}{\ensuremath{\gls{delta}_{\text{UB}}^{\text{EEPA}}}}{Upper bound on imperfect phase compensation in \gls{EEPA}}
\newlength\maxlength\newlength\thislength
\renewcommand*{\glsgroupheading}[1]{}
\begin{document}
\bstctlcite{IEEEexample:BSTcontrol}
\title{\LARGE Spectral and Energy Efficient User Pairing for RIS-assisted Uplink NOMA Systems with Imperfect Phase Compensation}
\author{\IEEEauthorblockN{Kusuma Priya P., Pavan Reddy M., Abhinav Kumar} \\
\IEEEauthorblockA{Department of Electrical Engineering, Indian Institute of Technology Hyderabad, India 502285\\Email:\{ee20mtech11007, ee14resch11005\}@iith.ac.in, abhinavkumar@ee.iith.ac.in}
\vspace{-0.6cm}}
\maketitle
\begin{abstract}
Non-orthogonal multiple access (NOMA) is considered a key technology for improving the spectral efficiency of fifth-generation (5G) and beyond 5G cellular networks. NOMA is beneficial when the channel vectors of the users are in the same direction, which is not always possible in conventional wireless systems. With the help of a reconfigurable intelligent surface (RIS), the base station can control the directions of the channel vectors of the users. Thus, by combining both technologies, the RIS-assisted NOMA systems are expected to achieve greater improvements in the network throughput. However, ideal phase control at the RIS is unrealizable in practice because of the imperfections in the channel estimations and the hardware limitations. This imperfection in phase control can have a significant impact on the system performance. Motivated by this, in this paper, we consider an RIS-assisted uplink NOMA system in the presence of imperfect phase compensation. We formulate the criterion for pairing the users that achieves minimum required data rates. We propose adaptive user pairing algorithms that maximize spectral or energy efficiency. We then derive various bounds on power allocation factors  for the paired users. Through extensive simulation results, we show that the proposed algorithms significantly outperform the state-of-the-art algorithms in terms of spectral and energy efficiency.
\end{abstract}
\begin{IEEEkeywords}
Energy efficiency, non-orthogonal multiple access, power allocation, reconfigurable intelligent surface, spectral efficiency, uplink, user pairing.
\end{IEEEkeywords}
\IEEEpeerreviewmaketitle 
\vspace{-0.3cm}
\section{Introduction}
\label{sec:Intro} 
In the \gls{5G} and beyond \gls{5G} communication systems, significant improvements are expected in terms of spectral efficiency, energy conservation, massive connectivity, and latency requirements. \gls{NOMA} is considered as a key multiple access technique to enhance the spectral efficiency of future cellular networks~\cite{intro1}.  Various \gls{NOMA} transmission schemes have been evaluated for possible consideration of \gls{NOMA} in \gls{5G} uplink scenario~\cite{38812}. In \gls{NOMA}, multiple users are allocated with the same time and frequency resources to achieve multi-fold improvement in the network throughputs. However, the users are multiplexed in either power domain (power-domain \gls{NOMA}) or code domain (code-domain \gls{NOMA}), which requires the receiver to employ a \gls{SIC} and decode the data~\cite{intro3}. Similar to \gls{NOMA}, \gls{IRS} is considered as a key technology to improve the spectral efficiency of the cellular networks~\cite{vincentpoor}. An \gls{IRS} is a two-dimensional planar array consisting of low-cost reflecting antenna elements that can modify the amplitude and phase of the signal. Thus, by using \gls{IRS}, the \gls{BS} can perform beamforming and transmit the signal in the desired direction. 

Unlike the spatial multiplexing in \gls{OMA}, NOMA is beneficial in situations where the channel vectors of users are in the same direction~\cite{vincentpoor}. This is not always possible in conventional wireless systems, whereas, in the case of \gls{IRS}-assisted systems, the \gls{BS} can control the direction of the user channel vectors by tuning the \gls{IRS}~\cite{vincentpoor,IRS5}. For these reasons, \gls{IRS}-assisted \gls{NOMA} systems have been widely considered to achieve significant improvements in the network performance~\cite{vincentpoor,base}. However, the ideal phase control is difficult to achieve in practice for various reasons like hardware impairments, channel estimation errors, etc. These imperfections in the phase control degrade the achievable spectral and energy efficiencies in the network. Further, in \gls{NOMA} systems, the achievable data rates are significantly dependent on the user pairing, and hence, while pairing the users in \gls{IRS}-assisted \gls{NOMA} systems, the network operator has to consider the imperfections in the phase compensation. Otherwise, the expected improvements in the network throughputs will not be realized in practice. For the aforementioned reasons, investigating the effect of imperfection in phase at the \gls{IRS} is crucial to achieve optimum system performance.

In~\cite{IRS3,IRS4}, the authors have proposed various channel estimation techniques for \gls{IRS}-assisted wireless systems. 
In~\cite{srm,IRSNOMA5}, the authors have formulated the sum-rate maximization as an optimization problem, and then, derived a near-optimal solution for \gls{IRS}-assisted uplink \gls{NOMA} systems.
However, 
limited works in the literature have considered the imperfection in phase for the downlink \gls{OMA} and \gls{NOMA} systems~\cite{base,model,mouni}. To the best of our knowledge, none of the existing works considered the imperfections in the phase compensation while pairing the users in \gls{IRS}-assisted uplink \gls{NOMA} systems. Motivated by this, we present the following key contributions in this paper. 
\begin{itemize}
\item For \gls{IRS}-assisted uplink \gls{NOMA} systems, we derive bounds on imperfection in the phase compensation to achieve minimum required data rates.
\item We propose user pairing algorithms that maximize spectral efficiency and energy efficiency, respectively.
\item We derive bounds and define the power allocation factors for the paired users.
\item Through extensive simulation results, we show that the proposed algorithms significantly outperform the state-of-the-art algorithms.
\end{itemize}

The rest of the paper is organized as follows. 
The system model is presented in Section~\ref{sec:model}.  In Section~\ref{sec:algo}, we propose the adaptive user pairing algorithms, derive the bounds on the power allocation factors, and formulate the criterion for pairing the users to maximize spectral and energy efficiencies. In Section~\ref{sec:NR}, we present the simulation results for various scenarios. Section~\ref{sec:conclusion} presents some concluding remarks and possible future works.

\vspace{-1cm}
\section{System Model}
\vspace{-0.1cm}
\label{sec:model}
\begin{figure}
\centering 
\includegraphics[scale=0.4]{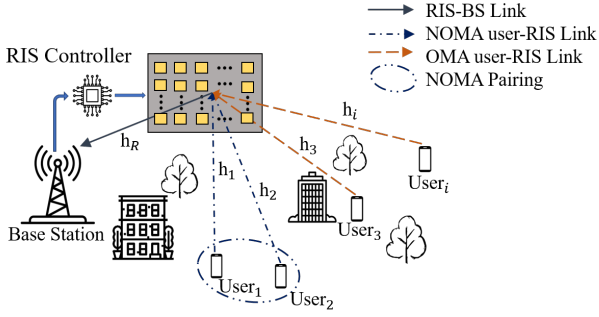}
\caption{System model.}
\vspace{-0.6cm}
\label{fig:SysModel}
\end{figure}
We consider an uplink scenario with $M$ antennae at the \gls{BS}, $N$ reflective antenna elements at the \gls{IRS}, and a single antenna at each user as shown in Fig.~\ref{fig:SysModel}. The direct link between the \gls{BS} and user is assumed to be blocked by the obstacles such as buildings, trees, or human body which is a likely scenario in case of mmwave communications~\cite{model}. Further,
the channel coefficients from $i^{th}$ user to \gls{IRS} and \gls{IRS} to \gls{BS} are defined as \gls{h_i} and \gls{h_r}, respectively and are formulated as follows~\cite{model}
\begin{align}
    \gls{h_i}&=\gls{beta}\mathbf{a}_N\left(\gls{psiIa},\gls{psiIe}\right),\label{eqn:hi}\\    
    \gls{h_r}&=\gls{alpha}\mathbf{a}_N\left(\gls{phiIa},\gls{phiIe}\right)\mathbf{a}_M^H\left(\gls{psiBa},\gls{psiBe}\right),\label{eqn:hr}
\end{align}
where \gls{beta} and \gls{alpha} denote the channel gains from $i^{th}$ user to \gls{IRS} and \gls{IRS} to \gls{BS}, respectively, \gls{phiIa} and \gls{phiIe} represent the \gls{aod} in azimuth and elevation at the \gls{IRS}, respectively, \gls{psiBa} and \gls{psiBe} represent \gls{aoa} in azimuth and elevation at the \gls{BS}, respectively, \gls{psiIa} and \gls{psiIe} represent the \gls{aoa} in azimuth and elevation at the \gls{IRS}, respectively, and 
$\mathbf{a}_X(\ohm^a,\ohm^e)$ represents the array response vector. 
A \gls{uspa} with $X$ antenna elements, has $\sqrt{X}$ elements in both horizontal and vertical directions, and thus, the array response vector is defined as 
\begin{align}
    \mathbf{a}_X(\ohm^a,\ohm^e)&=\begin{bmatrix}1\\ 
    \vdots\\e^{j\frac{2\pi d}{\gls{lambda}}(x\sin{\ohm^a}\sin{\ohm^a}+y\cos{\ohm^e})}\\ \vdots\\ 
    e^{j\frac{2\pi d}{\gls{lambda}}((\sqrt{X}-1)(\sin{\ohm^a}\sin{\ohm^a})+(\sqrt{X}-1)\cos{\ohm^e})}\end{bmatrix}^T,\nonumber
\end{align}
where, $d$ is the equispaced elemental distance, \gls{lambda} is the signal wavelength, $0\le x, y\le\sqrt{X}-1$ are the indices of the \gls{uspa} elements in the horizontal and vertical directions, respectively.

We denote the reflection matrix at the \gls{IRS} as \gls{theta} and define it as follows~\cite{vincentpoor}.
\begin{align}
 \gls{theta}&=\text{diag}(e^{j\theta_1},\hdots,e^{j\theta_k},\hdots,e^{j\theta_N}),\nonumber
\end{align}
where, \gls{thetak} $(1\le k \le N)$ is the phase shift introduced by $k^{th}$ reflecting element at the \gls{IRS}. However, note that the ideal phase control is difficult to achieve in practice because of various reasons like hardware limitations, imperfect channel state information, etc. For this reason, we consider imperfections in the phase control and define the practical reflection matrix as follows.
\begin{align}
 \gls{Theta}&=\text{diag}(e^{j\tilde{\theta}_1},\hdots,e^{j\tilde{\theta}_k},\hdots,e^{j\tilde{\theta}_N}),\nonumber
\end{align}
where, $\tilde{\theta}_k=\gls{thetak}+\gls{thetakcap}$ with \gls{thetakcap} being the error in phase control at $k^{th}$ antenna element. Further, we assume  \gls{thetakcap} to be uniformly distributed over $[-\gls{delta},\gls{delta}]$, $\gls{delta}\in[0,\pi)$.
Thus, in the case of \gls{OMA} transmission, the signal received from the $i^{th}$ user at the \gls{BS} is formulated as
\begin{equation}
    y_i^{\text{OMA}}=\gls{h_r^H}\gls{Theta}\gls{h_i}\gls{P_t}\gls{s_i}+\gls{n_i}, \nonumber
\end{equation}
where $\{ . \}^\text{H}$ denotes the Hermitian of a matrix, \gls{P_t} is the available transmit power at each user, \gls{s_i} is the data symbol transmitted by the $i^{th}$ user, and \gls{n_i} denotes the thermal noise. The \gls{SINR} of an $i^{th}$ user in an \gls{IRS}-assisted \gls{OMA} system is formulated as
\begin{equation}
    \gls{gammaoma}=\frac{\gls{P_t}{\|\gls{h_r^H}\gls{Theta}\gls{h_i}\|}^2}{\gls{I}+\gls{sigma^2}}\label{gammaomaold},
\end{equation}
where \gls{I} is the interference power and \gls{sigma^2} is the noise variance.
In the case of \gls{IRS} assisted \gls{NOMA}, we consider two users multiplexed in power domain, with channel coefficients \gls{h_1} and \gls{h_2}. Additionally, we assume the following
\begin{align}
    \|\gls{h_r^H}\gls{Theta}\gls{h_1}\|^2 > \|\gls{h_r^H}\gls{Theta}\gls{h_2}\|^2.\nonumber
\end{align}
Thus, the \gls{BS} receives \gls{P_t}(\gls{alpha_1}\gls{s_1}+\gls{alpha_2}\gls{s_2}), where \gls{s_1} and \gls{s_2} are the data symbols transmitted by the strong and weak user, respectively. Further, we consider a fractional power control scenario, where $0<\gls{alpha_1}, \gls{alpha_2}\leq 1$ represent the fraction of the total available transmit power used by the strong and weak users, respectively. Thus, the signal received from the $i^{th}$ user in an \gls{IRS}-assisted \gls{NOMA} system is formulated as follows
\begin{equation}
  y_i^{\text{NOMA}}=\gls{h_r^H}\gls{Theta}\gls{h_i}\gls{P_t}(\gls{alpha_1}\gls{s_1}+\gls{alpha_2}\gls{s_2})+\gls{n_i}.
  \label{eqn:NOMASys}
\end{equation}
From \eqref{eqn:NOMASys}, we define the \gls{SINR} of the users in \gls{IRS}-assisted \gls{NOMA} as follows
\begin{align}
    \gls{gamma_1}&=\frac{\gls{alpha_1}\gls{P_t}{\|\gls{h_r^H}\gls{Theta}\gls{h_1}\|}^2}{\gls{alpha_2}\gls{P_t}{\|\gls{h_r^H}\gls{Theta}\gls{h_2}\|}^2+\gls{I}+\gls{sigma^2}}\label{gamma1old},\\
    \gls{gamma_2}&=\frac{{\gls{alpha_2}\gls{P_t}\|\gls{h_r^H}\gls{Theta}\gls{h_2}\|}^2}{\gls{I}+\gls{sigma^2}}\label{gamma2old}.
\end{align}

From~\eqref{eqn:hi}-\eqref{eqn:hr}, we define the following~\cite{base}:
\begin{align}
    \gls{h_r^H}\gls{Theta}\gls{h_i}&= \gls{alpha}\gls{beta}\mathbf{a}_M(\gls{psiBa},\gls{psiBe})\sum_{k=1}^N e^{j\gls{thetakcap}},\nonumber\\
    \|\mathbf{a}_M(\gls{psiBa},\gls{psiBe}) \|^2&=M,\nonumber\\
    \|\gls{h_r^H}\gls{Theta}\gls{h_i} \|^2 & ={|\gls{alpha}\gls{beta}|}^2M\left|\sum_{k=1}^N e^{j\gls{thetakcap}}\right|^2.\label{normsq}
\end{align}
We adopt the following approximation from~\cite{model}:
\begin{align}
    \left|\frac{1}{N}\sum_{k=1}^N e^{j\gls{thetakcap}}\right|^2\xrightarrow{(a)}\left|\mathbb{E}\left[e^{j\gls{thetakcap}}\right]\right|^2\overset{(b)}{=}\left|\mathbb{E}\left[\cos \gls{thetakcap}\right]\right|^2\overset{(c)}{=}\sinc^2(\gls{delta}),\label{sum}
\end{align}
where $(a)$ follows the strong law of large numbers~\cite{model}, in $(b)$, the expectation of odd function $\sin{\gls{thetakcap}}$ vanishes over the interval $\gls{thetakcap}\in[-\gls{delta}, \gls{delta}]$, and  $(c)$ is obtained by using the probability density function $f(\gls{thetakcap})=\frac{1}{2\gls{delta}}$, where $\gls{thetakcap}\in[-\gls{delta}, \gls{delta}]$.
Substituting \eqref{sum} in \eqref{normsq}
\begin{align}
    \|\gls{h_r^H}\gls{Theta}\gls{h_i} \|^2&={|\gls{alpha}\gls{beta}|}^2MN^2\sinc^2(\gls{delta}),\nonumber\\
    \|\gls{h_r^H}\gls{theta}\gls{h_i} \|^2&={|\gls{alpha}\gls{beta}|}^2MN^2.\nonumber
\end{align}
Using these approximations, we define the \gls{CSI} of $i^{th}$ user as
\begin{align}
    \gls{csi_i}&\triangleq \frac{\gls{P_t}\|\gls{h_r^H}\gls{theta}\gls{h_i} \|^2}{\gls{I}+\gls{sigma^2}}=\frac{\gls{P_t}|\gls{alpha}\gls{beta}|^2N^2M}{\gls{I}+\gls{sigma^2}}.\label{gammacsi}
\end{align}
Using \eqref{gammacsi} in \eqref{gammaomaold}, \eqref{gamma1old}, and  \eqref{gamma2old}, we formulate the received \gls{SINR}s as follows
\begin{align}
 \gls{gammaoma}&= \gls{csi_i}\sinc^2(\gls{delta}),\nonumber\\
 \gls{gamma_1}&=\frac{\gls{alpha_1}\gls{csi_1}\sinc^2(\gls{delta})}{1+\gls{alpha_2}\gls{csi_2}\sinc^2(\gls{delta})},\nonumber \\
 \gls{gamma_2}&=\gls{alpha_2}\gls{csi_2}\sinc^2(\gls{delta}).\nonumber
\end{align}
Given a logarithmic rate model, the normalized achievable data rates by the users in \gls{OMA} and \gls{NOMA} are  formulated as~\cite{mouni}
\begin{align}
    R_i^{\text{OMA}}&=\frac{1}{2}\log_2\left(1+\gls{csi_i}\sinc^2(\delta)\right),\label{eqn:ro}\\
    \gls{R_1}&=\log_2\left(1+\frac{\gls{alpha_1}\gls{csi_1}\sinc^2(\gls{delta})}{1+\gls{alpha_2}\gls{csi_2}\sinc^2(\gls{delta})}\right),\label{eqn:r1}\\
    R_2^{\text{NOMA}}&=\log_2\left(1+\gls{alpha_2}\gls{csi_2}\sinc^2(\gls{delta})\right)\label{eqn:r2}.
\end{align}

Next, we propose various adaptive user pairing algorithms.

\section{Proposed Algorithms}
\label{sec:algo}

In this section,  we propose \gls{MPA} and \gls{EEPA} algorithms to maximize the sum rate and energy efficiency, respectively. 
Additionally, in both the algorithms, we derive a criterion for pairing the users in \gls{NOMA} which ensures the minimum required data rates are achieved for each user. We denote \gls{R1bar} and \gls{R2bar} as the minimum required data rates by the strong and weak users, respectively. Thus, both \gls{MPA} and \gls{EEPA} algorithms should satisfy the following constraints
\begin{align}
    \gls{R_1}\geq\gls{R1bar},\label{eqn:r1constraint}\\
    \gls{R_2}\geq\gls{R2bar}.\label{eqn:r2constraint}
\end{align}
\vspace{-0.9cm}
\subsection{\gls{MPA}}
\label{subsec:mpa}
In \gls{MPA}, apart from satisfying \eqref{eqn:r1constraint}-\eqref{eqn:r2constraint}, we allocate the transmit powers (\gls{alpha_1} and \gls{alpha_2}) that maximize the \gls{ASR} of the paired users, where, \gls{ASR} is defined as
\begin{align}
    \gls{ASR}=\gls{R_1}+\gls{R_2}\nonumber.
\end{align}
We formulate the desired optimization as follows.
\begin{align}
\max_{\gls{alpha_1},\gls{alpha_2}} \hspace{0.4cm} &\gls{R_1}+\gls{R_2},\label{prob:mpa}\\
\text{s.t.}\hspace{0.3cm} & \eqref{eqn:r1constraint}, \  \eqref{eqn:r2constraint} \nonumber,\\
 & \gls{alpha_1}\ge0, \gls{alpha_2}\ge0,\label{eqn:0constr}\\
        & \gls{alpha_1}\le1, \gls{alpha_2}\le1.\label{eqn:1constr} 
\end{align}

\subsubsection{Bounds on $\gls{alpha_1}$ and $\gls{alpha_2}$}
\label{subsec:mpabounds}
From \eqref{eqn:r1}-\eqref{eqn:r2}, we get, \begin{align}
\gls{R_1}+\gls{R_2}=\log_2(1+(\gls{alpha_1}\gls{csi_1}+\gls{alpha_2}\gls{csi_2})\sinc^2(\gls{delta})).\label{eqn:ASR}
\end{align}
Thus, \gls{ASR} is an increasing function with respect to \gls{alpha_1} and \gls{alpha_2}. However, from \eqref{eqn:r1}, an increase in \gls{alpha_2} will decrease the achievable data rate for the strong user, whereas, there is no such impact with an increase in \gls{alpha_1}. Hence, to maximize the \gls{ASR}, we assign $\gls{alpha_1}=1$. Thus, we get,
\begin{align}
    \gls{R_1}&=\log_2\left(1+\frac{\gls{csi_1}\sinc^2(\gls{delta})}{1+\gls{alpha_2}\gls{csi_2}\sinc^2(\gls{delta})}\right).\label{eqn:r11}
    \end{align}
From \eqref{eqn:r2} and \eqref{eqn:r2constraint},  we get,
\begin{align}
    \gls{alpha_2}\ge\frac{2^{\gls{R2bar}}-1}{\gls{csi_2}\sinc^2(\gls{delta})}&\triangleq \gls{alpha2lb}.\label{eqn:alpha2lb}
\end{align}
From \eqref{eqn:r1constraint} and \eqref{eqn:r11}, we get,

\begin{align}
\gls{alpha_2}\leq \frac{\gls{csi_1}\sinc^2(\gls{delta})+1-2^{\gls{R1bar}}}{\gls{csi_2}\sinc^2(\gls{delta})[2^{\gls{R1bar}}-1]}&\triangleq \gls{alpha2ub}.\label{eqn:alpha2ub}
\end{align}
\begin{figure}[t]
    \centering
    \includegraphics[scale=0.5]{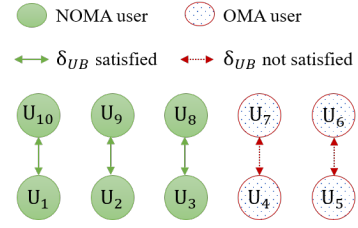}
    \caption{Illustration of adaptive user pairing.}
    \label{fig:users}
\vspace{-0.4cm}
\end{figure}\vspace{-0.2cm}
\subsubsection{Criterion for pairing the users}
Using \eqref{eqn:alpha2lb}-\eqref{eqn:alpha2ub}, and assuming
$\gls{alpha2ub}\geq\gls{alpha2lb}$, we obtain the pairing criterion for the \gls{MPA} algorithm as follows.
\begin{align}
    \sinc^2(\gls{delta})\geq\frac{2^{\gls{R2bar}}\left[2^{\gls{R1bar}}-1\right]}{\gls{csi_1}}\triangleq \sinc^2(\gls{delubmpa}).\label{eqn:delubmpa}
\end{align}
We define \gls{R1bar} and \gls{R2bar} as the achievable \gls{OMA} rates, and then, pair the users in \gls{NOMA} iff \eqref{eqn:delubmpa} is satisfied. Otherwise, we consider transmitting the information for the users in an \gls{OMA} scenario as shown in Fig.~\ref{fig:users}. This way, the proposed algorithm ensures that a minimum of \gls{OMA} rates are achieved in a worst-case scenario and it maximizes the \gls{ASR} by switching to \gls{NOMA} whenever feasible.

\subsubsection{Power allocation}
For the paired users, we define the power allocation factors as follows.
\begin{align}
    \gls{alpha1mpa}&=1,\label{eqn:alpha1mpa}\\ 
    \gls{alpha2mpa}&=\min\{\gls{alpha2ub},1\}.\label{eqn:alpha2mpa}
\end{align}
Note that in \eqref{eqn:alpha1mpa}, the strong user transmits at maximum power to maximize the spectral efficiency,  
whereas, in \eqref{eqn:alpha2mpa}, the weak user transmits at a maximum possible power that does not degrade the strong user's data rate beyond the \gls{OMA} rate. Further, to avoid transmit power violations, we limit the maximum value of \gls{alpha2mpa} to 1 in \eqref{eqn:alpha2mpa}. 
\begin{lemma}
The power allocation factors formulated in {\eqref{eqn:alpha1mpa}~-~\eqref{eqn:alpha2mpa}} are the optimal values that achieve maximum sum rate while ensuring the individual \gls{NOMA} rates to be better than the \gls{OMA} counterparts.
\end{lemma}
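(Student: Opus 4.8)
The plan is to turn the constrained maximization \eqref{prob:mpa} into two successive single-variable problems by exploiting the monotonicity already exposed in \eqref{eqn:ASR}. Since the objective equals $\log_2\!\big(1+(\gls{alpha_1}\gls{csi_1}+\gls{alpha_2}\gls{csi_2})\sinc^2(\gls{delta})\big)$, it depends on $(\gls{alpha_1},\gls{alpha_2})$ only through the positively weighted aggregate $\gls{alpha_1}\gls{csi_1}+\gls{alpha_2}\gls{csi_2}$ and is therefore strictly increasing in each coordinate; maximizing it is equivalent to maximizing this linear functional over the feasible polygon cut out by \eqref{eqn:r1constraint}, \eqref{eqn:r2constraint}, \eqref{eqn:0constr}, and \eqref{eqn:1constr}. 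The optimum must thus be pushed toward larger powers.

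First I would fix the strong user by an exchange argument: suppose an optimizer had $\gls{alpha_1}<1$. Increasing $\gls{alpha_1}$ strictly increases the objective; by \eqref{eqn:r1} it also increases $\gls{R_1}$, which only loosens \eqref{eqn:r1constraint}; and since $\gls{alpha_1}$ does not appear in \eqref{eqn:r2}, the constraint \eqref{eqn:r2constraint} is untouched. Hence the move stays feasible while improving the objective, contradicting optimality. Therefore $\gls{alpha_1}=1$, which is \eqref{eqn:alpha1mpa}, and substituting this into \eqref{eqn:r1} yields \eqref{eqn:r11}.

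Next, with $\gls{alpha_1}=1$, the objective is increasing in $\gls{alpha_2}$, so I would take $\gls{alpha_2}$ as large as the constraints allow. The binding upper limits are the box bound $\gls{alpha_2}\le1$ and the bound $\gls{alpha_2}\le\gls{alpha2ub}$ that \eqref{eqn:r1constraint} imposes through \eqref{eqn:r11} (namely \eqref{eqn:alpha2ub}); the lower bound $\gls{alpha_2}\ge\gls{alpha2lb}$ from \eqref{eqn:r2constraint} (namely \eqref{eqn:alpha2lb}) is inactive precisely because the objective increases with $\gls{alpha_2}$ and hence never determines the maximizer. The maximizer is then $\gls{alpha_2}=\min\{\gls{alpha2ub},1\}$, which is \eqref{eqn:alpha2mpa}.

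Finally I would confirm feasibility and the OMA-dominance claim. Feasibility needs $\gls{alpha2lb}\le\min\{\gls{alpha2ub},1\}$; the inequality $\gls{alpha2lb}\le\gls{alpha2ub}$ is exactly what the pairing criterion \eqref{eqn:delubmpa} guarantees, while $\gls{alpha2lb}\le1$ follows from \eqref{eqn:alpha2lb} once $\gls{R2bar}$ is set to the OMA rate. Since $\gls{R1bar}$ and $\gls{R2bar}$ are defined as the achievable OMA rates, the satisfied constraints \eqref{eqn:r1constraint}-\eqref{eqn:r2constraint} immediately give $\gls{R_1}\ge R_1^{\text{OMA}}$ and $\gls{R_2}\ge R_2^{\text{OMA}}$. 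I expect the only genuinely delicate step to be justifying that this coordinate-by-coordinate (greedy) maximization is globally optimal rather than merely locally so; the cleanest way to close that gap is the observation above that the feasible set is a polygon and the objective is a monotone function of a single linear functional, so its maximum is attained at the vertex selected by the two steps, making the argument exact.
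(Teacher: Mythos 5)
Your proof is correct, but it takes a genuinely different route from the paper. The paper attacks the reformulated linear program \eqref{prob:finalmpa} with the full KKT machinery: it writes the Lagrangian, solves the stationarity and complementary-slackness system, enumerates the five candidate points $(\gls{alpha2lb}\gls{kappa}+\gls{eta},\gls{alpha2lb})$, $(\gls{kappa}+\gls{eta},1)$, $(1,1)$, $(1,\gls{alpha2lb})$, $(1,\tfrac{1-\gls{eta}}{\gls{kappa}})$, discards those with $\gls{alpha_1}<1$ and the infeasible point $(1,1)$, and picks $(1,\gls{alpha2ub})$ over $(1,\gls{alpha2lb})$ using $\gls{alpha2ub}\ge\gls{alpha2lb}$. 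You instead exploit that the objective is a monotone transform of the linear functional $\gls{alpha_1}\gls{csi_1}+\gls{alpha_2}\gls{csi_2}$ over a polygonal feasible set, fix $\gls{alpha_1}=1$ by an exchange argument (increasing $\gls{alpha_1}$ improves the objective, loosens \eqref{eqn:r1constraint}, and leaves \eqref{eqn:r2constraint} untouched), and then maximize over the remaining interval $[\gls{alpha2lb},\min\{\gls{alpha2ub},1\}]$ in $\gls{alpha_2}$. Your argument is more elementary and self-certifying of global optimality (monotonicity over an interval needs no appeal to sufficiency of KKT for the linear program), and your explicit feasibility check that $\gls{alpha2lb}\le 1$ when $\gls{R2bar}$ is the OMA rate is a point the paper leaves implicit; what the paper's KKT enumeration buys in exchange is a template that carries over to the pseudo-concave \gls{EEPA} objective \eqref{prob:eepa}, where a coordinatewise monotonicity argument would no longer apply. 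Both proofs arrive at the same optimizer $(1,\gls{alpha2ub})$, equivalently \eqref{eqn:alpha1mpa}--\eqref{eqn:alpha2mpa} after capping $\gls{alpha_2}$ at $1$.
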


\begin{proof}
For the ease of understanding, we define
\begin{align}
    \gls{eta}&\triangleq\frac{2^{\gls{R1bar}}-1}{\gls{csi_1}\sinc^2(\gls{delta})},\label{eqn:eta}\\
    \gls{kappa}&\triangleq\frac{(2^{\gls{R1bar}}-1)\gls{csi_2}}{\gls{csi_1}}.\label{eqn:kappa}
\end{align}
Thus, we reformulate \eqref{prob:mpa} as follows
\begin{align}
 \max_{\gls{alpha_1},\gls{alpha_2}} \hspace{0.45cm} &\gls{alpha_1}\gls{csi_1}+\gls{alpha_2}\gls{csi_2},\label{prob:finalmpa}\\
\text{s.t.}\hspace{0.35cm} 
&-\gls{alpha_1}+\gls{alpha_2}\gls{kappa}+\gls{eta}\le0,\label{eqn:Lemm3}\\
& -\gls{alpha_2}+\gls{alpha2lb}\le0,\label{eqn:Lemm2}\\
& \hspace{0.4cm} \gls{alpha_1}-1\le0, \gls{alpha_2}-1\le0.\label{eqn:Lemm1}
\end{align}
where, \eqref{prob:finalmpa} is obtained by using the fact that logarithmic function is a monotonically increasing function and substituting \eqref{eqn:ASR} in \eqref{prob:mpa}. The constraint \eqref{eqn:Lemm3} is obtained by substituting \eqref{eqn:eta}-\eqref{eqn:kappa} in \eqref{eqn:r1constraint} and the constraint \eqref{eqn:Lemm2} is obtained by further solving the \eqref{eqn:r2constraint}. Additionally, since \gls{R1bar}, \gls{R2bar} are non negative, we get $\gls{alpha2lb},\ \gls{eta},\ \gls{kappa}>0$, and thus, we consider the constraint \eqref{eqn:0constr} is already captured in \eqref{eqn:Lemm3}-\eqref{eqn:Lemm2}. Next, the Lagrangian for \eqref{prob:finalmpa} is formulated as
\begin{align}
    L(\gls{alpha_1},\gls{alpha_2},\mu_1,\mu_2,\mu_3,\mu_4)=\gls{alpha_1}\gls{csi_1}+\gls{alpha_2}\gls{csi_2}-\mu_1(\gls{alpha_1}-1)\nonumber\\-\mu_2(\gls{alpha_2}-1)-\mu_3(-\gls{alpha_2}+\gls{alpha2lb})-\mu_4(-\gls{alpha_1}+\gls{alpha_2}\gls{kappa}+\gls{eta}).\nonumber
\end{align}
Solving stationarity conditions $\frac{\partial L}{\partial \gls{alpha_1}}=0\  \text{and}\  \frac{\partial L}{\partial \gls{alpha_2}}=0$, we get
\begin{align}
    \gls{csi_1}-\mu_1+\mu_4&=0,\label{eqn:kkt1}\\
    \gls{csi_2}-\mu_2+\mu_3-\mu_4\gls{kappa}&=0.\label{eqn:kkt2}
\end{align}
The complementary slackness conditions are formulated as
\begin{align}
   \mu_1(\gls{alpha_1}-1)&=0,\label{eqn:kkt3}\\
   \mu_2(\gls{alpha_2}-1)&=0,\label{eqn:kkt4}\\
   \mu_3(-\gls{alpha_2}+\gls{alpha2lb})&=0,\label{eqn:kkt5}\\
   \mu_4(-\gls{alpha_1}+\gls{alpha_2}\gls{kappa}+\gls{eta})&=0.\label{eqn:kkt6}
\end{align}
The dual feasibility conditions are formulated as
\begin{align}
    \mu_i\ge0\hspace{0.1cm},\ \forall\hspace{0.1cm}i \in [1,4].\label{eqn:kkt7}
\end{align}
Solving \eqref{eqn:kkt1}-\eqref{eqn:kkt7}, we find that the possible optimal values of $(\gls{alpha_1},\gls{alpha_2})$ are
$(\gls{alpha2lb}\gls{kappa}+\gls{eta},\ \gls{alpha2lb}), \ (\gls{kappa}+\gls{eta},1),\ (1,1),\ (1,\gls{alpha2lb}),$ and $\ (1,\frac{1-\gls{eta}}{\gls{kappa}})$. From \eqref{prob:finalmpa}, larger the values of \gls{alpha_1} and \gls{alpha_2}, larger will be the \gls{ASR}. Since, $\gls{alpha_1}=1$ has no impact on any of the desired constraints, we consider only the solutions with $\gls{alpha_1}=1$ which are 
$(1,1), (1,\gls{alpha2lb})$, and $(1,\frac{1-\gls{eta}}{\gls{kappa}})$. Note that considering $(\gls{alpha_1},\gls{alpha_2})=(1,1)$ will violate the constraint \eqref{eqn:Lemm3}. Hence, by substituting $\gls{alpha2ub}=\frac{1-\gls{eta}}{\gls{kappa}}$, the optimal values of (\gls{alpha_1},\gls{alpha_2}) are either  $(1,\gls{alpha2ub})$ or $(1,\gls{alpha2lb})$. Since, $\gls{alpha2ub}\geq \gls{alpha2lb}$, we conclude $(\gls{alpha_1},\gls{alpha_2})=(1,\gls{alpha2ub})$ as the optimal solution.
This completes the proof of the Lemma 1.
\end{proof}
Next, we present the \gls{EEPA} algorithm.
\begin{algorithm}[t]
\caption{Proposed algorithms} \label{alg:aup}
\kwin{$\gls{csi_i},\  \forall i \in [1,G]$.}
\kwvar{$i$ represents the user pairing index.}
Sort the $G$ users in decreasing order of \gls{csi_i}\;
Set $i=1$\;
\While{$i<\frac{G}{2}+1$}{
{Consider $i^{th}$ user as strong user and $(G-i+1)^{th}$ user as the weak user\;
Calculate \gls{R1bar}, \gls{R2bar}} from \eqref{eqn:ro}\;
\uIf{\gls{MPA}}{
Calculate \gls{delubmpa} from \eqref{eqn:delubmpa}\;
\eIf{$\gls{delta}\le$ \gls{delubmpa}}{
Pair the users in \gls{NOMA} with \gls{alpha_1}=1, \gls{alpha_2}=$\min\{\gls{alpha2ub},1\}$ as per \eqref{eqn:alpha1mpa}, \eqref{eqn:alpha2mpa}}
{Consider the users in \gls{OMA}\;}
}
\uElseIf{\gls{EEPA}}{
Calculate $\gls{delubeepa}=\min\{\gls{delub1},\gls{delub2}\}$ 
from \eqref{eqn:delubeepa}\;
\eIf{$\gls{delta}\le\gls{delubeepa}$}{Pair the users in \gls{NOMA} with (\gls{alpha_1},\ \gls{alpha_2}) obtained from the solution of \eqref{prob:eepa}\;}
{Consider the users in \gls{OMA}\;}
}$i=i+1$\;}

\end{algorithm}



\begin{figure*}
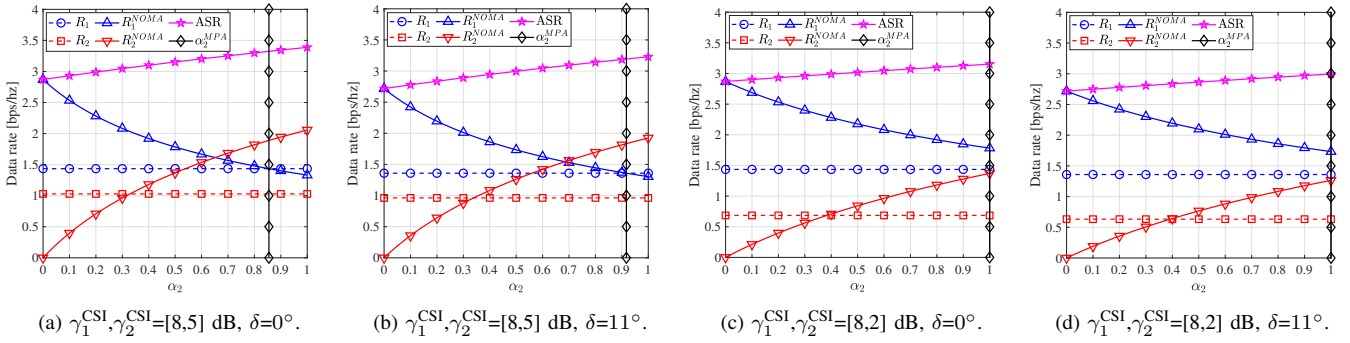

    \centering
    \begin{subfigure}[t]{0.25\textwidth}
    \centering
    \includegraphics[width=\textwidth]{Fig3a.eps}
    \captionsetup{font=footnotesize}
    \caption{\gls{csi_1},\gls{csi_2}=[8,5] dB, \gls{delta}=$0\degree$.}
    \label{subfig:8,5,0}
    \end{subfigure}%
    \begin{subfigure}[t]{0.25\textwidth}
    \centering
    \includegraphics[width=\textwidth]{Fig3b.eps}
    \captionsetup{font=footnotesize}
    \caption{\gls{csi_1},\gls{csi_2}=[8,5] dB, \gls{delta}=$11\degree$.}
    \label{subfig:8,5,11}
    \end{subfigure}%
    \begin{subfigure}[t]{0.25\textwidth}
    \centering
    \includegraphics[width=\textwidth]{Fig3c.eps}
    \captionsetup{font=footnotesize}
    \caption{\gls{csi_1},\gls{csi_2}=[8,2] dB, \gls{delta}=$0\degree$.}
    \label{subfig:8,2,0}
    \end{subfigure}%
    \begin{subfigure}[t]{0.25\textwidth}
    \centering
    \includegraphics[width=\textwidth]{Fig3d.eps}
    \captionsetup{font=footnotesize}
    \caption{\gls{csi_1},\gls{csi_2}=[8,2] dB, \gls{delta}=$11\degree$.}
    \label{subfig:8,2,11}
    \end{subfigure}
    \caption{Comparison of achievable data rates for varying power allocation factor of the weak user.}
    \label{Fig:ratesvsalpha2}
    \end{figure*}

\subsection{EEPA}
In \gls{EEPA}, apart from satisfying \eqref{eqn:r1constraint}-\eqref{eqn:r2constraint}, we allocate the powers (\gls{alpha_1} and \gls{alpha_2}) that maximize the \gls{EE} of the paired users, where, \gls{EE} is defined as
\vspace{-1.4cm}
\begin{align}
    \gls{EE}&=\frac{\gls{ASR}}{\gls{alpha_1}+\gls{alpha_2}},\nonumber\\&=\frac{\log_2\left[1+\left(\gls{alpha_1}\gls{csi_1}+\gls{alpha_2}\gls{csi_2}\right)\sinc^2(\gls{delta})\right]}{\gls{alpha_1}+\gls{alpha_2}}.\nonumber
\end{align}
Thus, we formulate the optimisation problem as follows.
\begin{align}
        \underset{\gls{alpha_1},\gls{alpha_2}}{\max } \hspace{0.4cm} &  \frac{\log_2\left[1+\left(\gls{alpha_1}\gls{csi_1}+\gls{alpha_2}\gls{csi_2}\right)\sinc^2(\gls{delta})\right]}{\gls{alpha_1}+\gls{alpha_2}}\label{prob:eepa},\\
        \text{s.t.}\hspace{0.3cm} & \eqref{eqn:Lemm3}-\eqref{eqn:Lemm1}.\nonumber
\end{align}\vspace{-0.9cm}
\subsubsection{Criterion for pairing the users}
\label{subsec:eepabounds}

Solving \eqref{eqn:Lemm3} for a worst case scenario of \gls{alpha_2} $=1$,  we get,
\begin{align}
    \gls{alpha_1} \ge \gls{kappa}+\gls{eta}&=\frac{2^{\gls{R1bar}} -1}{\gls{csi_1}}\left[\gls{csi_2}+\frac{1}{\sinc^2(\gls{delta})}\right].\label{eqn:alpha1ub}
\end{align} 
Using \eqref{eqn:alpha1ub} in \eqref{eqn:1constr}, we get,
\begin{align}
    \frac{2^{\gls{R1bar}} -1}{\gls{csi_1}} \left[\gls{csi_2}+\frac{1}{\sinc^2(\gls{delta})}\right]\le&1,\nonumber\\
     \sinc^2(\gls{delta})\ge&\frac{1}{\left(\frac{\gls{csi_1}}{2^{\gls{R1bar}} -1}\right)-\gls{csi_2}}\triangleq \sinc^2(\gls{delub1}).
    \nonumber
\end{align}
Using \eqref{eqn:alpha2lb} in \eqref{eqn:1constr}, we get, 
\begin{align}
    \gls{alpha2lb}&\le 1, \nonumber\\
    \sinc^2(\gls{delta})&\ge \frac{2^{\gls{R2bar}}-1}{\gls{csi_2}} \triangleq \sinc^2(\gls{delub2}). \nonumber 
\end{align}
Thus, we define the upper bound on the phase imperfection in \gls{EEPA} as
\begin{align}
    \gls{delubeepa}=\min\{\gls{delub1},\gls{delub2}\}.\label{eqn:delubeepa}
\end{align}
We define \gls{R1bar} and \gls{R2bar} as the achievable \gls{OMA}  rates, and then, pair the users in \gls{NOMA} iff the imperfection in the phase compensation is less than \gls{delubeepa}. Otherwise, we consider transmitting the information for the users in an \gls{OMA} scenario as shown in Fig.~\ref{fig:users}. 

\subsubsection{Power allocation}
The objective function formulated in \eqref{prob:eepa} is a strictly pseudo-concave function~\cite{prop6} and an efficient way of obtaining a solution is to use the Dinkelbach's algorithm~\cite{prop6,eepa}.  An outline of implementing \gls{MPA} and \gls{EEPA} is presented in detail in Algorithm~\ref{alg:aup}.

\vspace{-0.3cm}

\section{Numerical Results}
\label{sec:NR}
\begin{figure}[t]
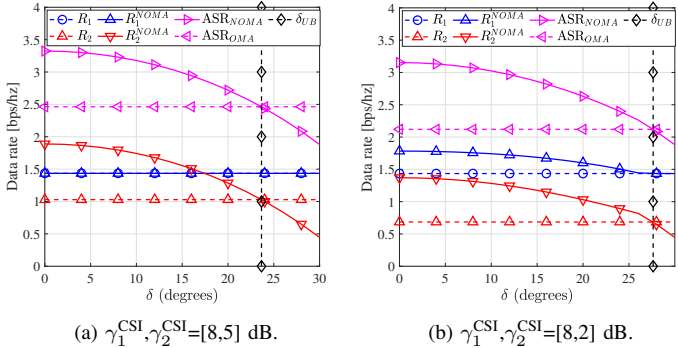

    \centering
    \vspace{-0.4cm}
    \begin{subfigure}[t]{0.26\textwidth}
    \centering
    \includegraphics[width=\textwidth]{Fig4a.eps}
    \captionsetup{font=footnotesize}
    \caption{\gls{csi_1},\gls{csi_2}=[8,5] dB.}
    \label{subfig:ratevsdelta8,5}
    \end{subfigure}%
    \begin{subfigure}[t]{0.26\textwidth}
    \centering
    \includegraphics[width=\textwidth]{Fig4b.eps}
    \captionsetup{font=footnotesize}
    \caption{\gls{csi_1},\gls{csi_2}=[8,2] dB.}
    \label{subfig:ratevsdelta8,2}
    \end{subfigure}
    \caption{Comparison of achievable data rates for varying imperfection in phase.}
    \label{Fig:ratesvsdelta}\vspace{-0.4cm}
\end{figure}
For the simulations, we have considered $M=8$, $N=32$, and $\overline{R_i}=R_i^\text{OMA}$. We have then dropped the \gls{BS}s and users from Poisson point distribution with densities of 25 BS/km$^2$ and 2000 users/km$^2$, respectively. For each user, we have calculated the path loss and the received signal power from each \gls{BS} by considering the urban cellular path loss model presented in \cite{38901}.
The users are then associated to the \gls{BS} from which they receive the maximum signal power, and the rest of the \gls{BS}s are considered as interfering \gls{BS}s. Given this simulation set-up, we have calculated various performance metrics with the proposed and the existing state-of-the-art algorithms which are summarised as follows.

In Fig.~\ref{Fig:ratesvsalpha2}, we present the achievable data rates for varying \gls{alpha2mpa}. For the evaluation, we consider two configurations of user pairs with $[\gls{csi_1},\gls{csi_2}]=[8,5]~\text{dB}$, $[\gls{csi_1},\gls{csi_2}]=[8,2]~\text{dB}$, and two configurations of imperfection in phase compensation, $\gls{delta}=0^\circ, 11^\circ$. As shown in Fig.~\ref{Fig:ratesvsalpha2}, with increasing $\alpha_2$, the data rates for weak user increases. This increase in $\alpha_2$ also increases interference for the strong user, and thus, the achievable data rate for the strong user decreases. Note that for a fixed \gls{delta}, the achievable data rates and sum rate are better in Fig.~\ref{subfig:8,5,0} as compared to Fig.~\ref{subfig:8,2,0} because of the better \gls{SINR} conditions. Further, with an increase in the \gls{delta}, the achievable data rates and sum rate are smaller in Fig.~\ref{subfig:8,5,11} when compared to Fig.~\ref{subfig:8,5,0}. As shown in Fig.~\ref{Fig:ratesvsalpha2}, the \gls{ASR} is a non-decreasing function of \gls{alpha_2} which aligns with our formulation in the Section~\ref{subsec:mpa}. Further, beyond the proposed $\alpha_2=\gls{alpha2mpa}$, the individual data rates are not better than the \gls{OMA} counterparts. Thus, we validate the proposed bounds on power allocation factors in the presence of imperfect phase compensation.

In Fig.~\ref{Fig:ratesvsdelta}, we present  the comparison of the achievable data rates for varying \gls{delta}. 
For the evaluation, we consider two configurations of user pairs with $[\gls{csi_1},\gls{csi_2}]=[8,5]$~dB and $[\gls{csi_1},\gls{csi_2}]=[8,2]$~dB in Fig.~\ref{subfig:ratevsdelta8,5} and Fig.~\ref{subfig:ratevsdelta8,2},  respectively.
Further, we consider $\alpha_1=1$ and $\alpha_2=\gls{alpha2mpa}$ while calculating the achievable data rates. With an increase in \gls{delta}, the achievable rates decrease. Additionally, whenever $\gls{delta}<\gls{delubmpa}$, the achievable data rates and sum-rates are always better than the \gls{OMA} counterparts. Thus, we validate the proposed bound on the imperfection in the phase compensation. Note that a similar analysis is extendable for the \gls{EEPA} scenario.

    \begin{figure}
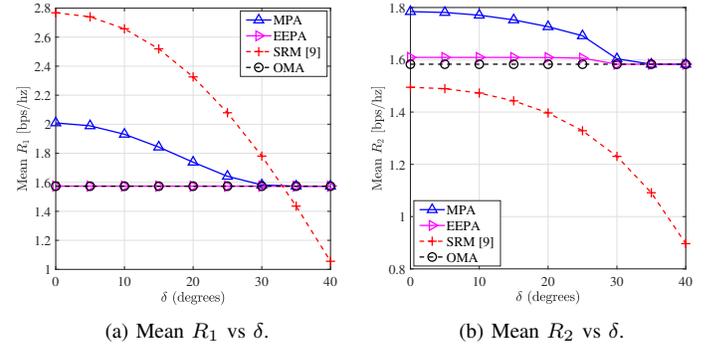
\vspace{-0.4cm}
    \begin{subfigure}[t]{0.26\textwidth}
    \centering
    \includegraphics[width=\textwidth]{R1vsdelta.eps}
    \captionsetup{font=footnotesize}
    \caption{Mean $R_1$ vs \gls{delta}.}
    \label{subfig:r1vsdelta}
    \end{subfigure}%
    \begin{subfigure}[t]{0.26\textwidth}
    \centering
    \includegraphics[width=\textwidth]{R2vsdelta.eps}
    \captionsetup{font=footnotesize}
    \caption{Mean $R_2$ vs \gls{delta}.}
    \label{subfig:r2vsdelta}
    \end{subfigure}
    \caption{Comparison of mean achievable data rates of strong and weak user with various algorithms.}
    \label{Fig:comp1}
    \vspace{-0.4cm}\end{figure}
    
In Fig.~\ref{Fig:comp1}, we present the performance comparison of mean of  achievable data rates with various algorithms. As shown in Fig.~\ref{Fig:comp1}, for both strong and weak user, the performance of \gls{SRM}~\cite{srm} declines gradually with increase in \gls{delta} and the data rates fall below \gls{OMA} for larger \gls{delta}. In case of the proposed algorithms, both \gls{MPA} and \gls{EEPA} consider the imperfections in the phase compensation, and hence, the data rates gradually converge to \gls{OMA} the rates with an increase in \gls{delta}. 
Further, as shown in Fig.~\ref{subfig:r2vsdelta}, the \gls{SRM} algorithm tries to maximize the \gls{ASR} and in the process significantly decreases the weak user data rates beyond the required \gls{OMA} rates. However, the \gls{MPA} algorithm maximizes the strong user rate to achieve maximum \gls{ASR} and yet ensures both strong and weak user achieve minimum of \gls{OMA} rates. The \gls{EEPA} algorithm allocates minimum power to each user to ensure the minimum required \gls{OMA} rates are achieved, and hence, the data rates with \gls{EEPA} are significantly lower than \gls{MPA} rates and slightly higher than the \gls{OMA} rates.

    \begin{figure}
    \centering
    \includegraphics[width=0.38\textwidth]{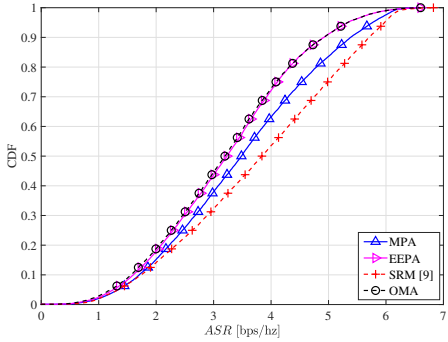}
    \caption{Comparison of mean of achievable data rates of strong and weak user with various algorithms.}
    \vspace{-0.5cm}
    \label{Fig:Asrcdf}
    \end{figure}

In Fig.~\ref{Fig:Asrcdf}, we present the comparison of the mean \gls{ASR} with the proposed algorithms against the \gls{SRM} and \gls{OMA} in the presence of imperfect phase compensation. As shown in Fig.~\ref{Fig:Asrcdf}, \gls{SRM} achieves highest \gls{ASR} as compared to all the algorithms.  However, as shown in Fig.~\ref{Fig:comp1}, the \gls{SRM} does not achieve minimum required rates for the weak user,
whereas, the proposed \gls{MPA} maximizes \gls{ASR} while ensuring a minimum of \gls{OMA} rates for both strong and weak users. Further, note that with \gls{EEPA}, the achievable sum rate is slightly higher than the \gls{OMA} rates.


In Fig.~\ref{Fig:comp2}, we present the comparison of the mean achievable sum rate and energy efficiency with various algorithms.  As shown in Fig.~\ref{subfig:asrvsdelta} and \ref{subfig:eevsdelta}, the \gls{SRM} has highest mean \gls{ASR} and \gls{EE} at the lower \gls{delta}. However, note that \gls{SRM} algorithm does not ensure that individual users achieve a minimum of \gls{OMA} rates. 
Further, with increasing \gls{delta}, the mean \gls{ASR} and \gls{EE} of the both the proposed algorithms converge to the \gls{OMA} rates, whereas, the performance of the  \gls{SRM} degrades significantly as compared to the \gls{OMA} rates.  Hence, it is not always beneficial to pair the users in \gls{NOMA}. Thus, we conclude that, the proposed algorithms outperform the existing algorithms in presence of imperfection in phase compensation. Additionally, they also maximize the data rates or energy efficiency while ensuring minimum required data rates for each user.

    \begin{figure}
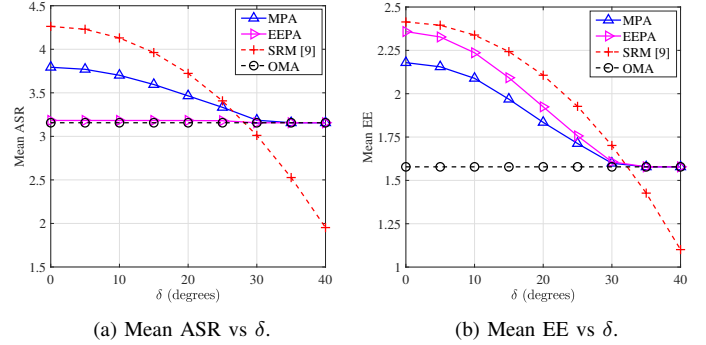

    \centering
    \begin{subfigure}[t]{0.26\textwidth}
    \centering
    \includegraphics[width=\textwidth]{ASRvsdelta.eps}
    \captionsetup{font=footnotesize}
    \caption{Mean \gls{ASR} vs \gls{delta}.}
    \label{subfig:asrvsdelta}
    \end{subfigure}%
    \begin{subfigure}[t]{0.26\textwidth}
    \centering
    \includegraphics[width=\textwidth]{EEvsdelta.eps}
    \captionsetup{font=footnotesize}
    \caption{Mean \gls{EE} vs \gls{delta}.}
    \label{subfig:eevsdelta}
    \end{subfigure}
    \caption{Comparison of mean achievable sum rate and energy efficiency with various algorithms.}
    \label{Fig:comp2}
    \end{figure}




\vspace{-0.1cm}
\section{Conclusion}
\label{sec:conclusion}
We have proposed adaptive user pairing algorithms for \gls{IRS}-assisted uplink \gls{NOMA} systems that maximize the achievable sum-rate or energy efficiency. We have formulated the criterion for user pairing based on the derived bounds on imperfection in the phase compensation. Through numerical results, we have validated the derived bounds and the proposed pairing criterion. Further, we have proposed novel power allocation procedures for the paired users. We have performed extensive system-level simulations and have shown that the proposed algorithms achieve significant improvement over the state-of-the-art algorithms with increase in phase imperfection. In the future, we plan to validate the proposed algorithms on the hardware test-beds.
\vspace{-0.5cm}
\bibliographystyle{ieeetran}
\bibliography{Bibfile}
\end{document}